\documentclass{Manuscript_Template_ICCA9_LaTex}
\usepackage{graphicx}
\usepackage{amsmath}
\usepackage{amssymb}
\usepackage{amsthm}
\usepackage{dsfont}
\usepackage{url}

\newcommand{\cs}{\mathfrak{S}}
\newcommand{\cf}{\mathcal{F}}

\newcommand{\cm}{\mathcal{M}}
\newcommand{\ct}{\mathcal{T}}
\newcommand{\ck}{\mathfrak{k}}
\newcommand{\ch}{\mathfrak{h}}

\newcommand{\cd}{\mathcal{D}}
\newcommand{\bcd}{\bar{\mathcal{D}}}
\newcommand{\hcd}{\hat{\mathcal{D}}}

\newcommand{\BC}{\mathbb{C}}
\newcommand{\BR}{\mathbb{R}}

\newcommand{\be}{\begin{equation}}
\newcommand{\ee}{\end{equation}}

\newcommand{\I}{\mathrm{i}} 
\newtheorem{proposition}{Proposition}

\newtheorem{remark}{Remark}
\newtheorem{lemma}{Lemma}
\renewcommand\Im{\operatorname{\mathfrak{Im}}}
\renewcommand\Re{\operatorname{\mathfrak{Re}}}
\title{GRAVITATION ON A HOMOGENEOUS DOMAIN}
\author{Arkadiusz Jadczyk}
\address{Center CAIROS, Institut de Math\'{e}matiques de Toulouse\\
Universit\'{e} Paul Sabatier, 118 Rout\'{e} de Narbonne 31062, Toulouse Cedex 9, France\\
\selectfont \normalfont
E-mail: arkadiusz.jadczyk@cict.fr}
\keywords{gravitation, coherent states, complex domain, conformal group, Poincare disk, Shilov's boundary, compactified Minkowski space, quantization, de Sitter}
\abstract{Among all plastic deformations of the gravitational Lorentz vacuum \cite{wr1} a particular role is being played by conformal deformations. These are conveniently described by using the homogeneous space for the conformal group $SU(2,2)/S(U(2)\times U(2))$ and its Shilov boundary - the compactified Minkowski space $\tilde{M}$ \cite{aj1}.  In this paper we review the geometrical structure involved in such a description. In particular we demonstrate that  coherent states on the homogeneous K\"{a}hler domain give rise to Einstein-like plastic conformal deformations when extended to $\tilde{M}$ \cite{aj2}.}
\begin{document}
\section{Introduction}
William Kingdon Clifford speculated \cite[p. 22]{tucker} that the curvature of space is responsible for all motions of matter and fields - the idea that has been taken over by Albert Einstein in his theory of gravitation, through with the extra assumption of the weak equivalence and general covariance principles. P. A. M. Dirac, originally impressed by General Relativity Theory, later on had his doubts about the validity of general covariance, when the lessons of quantum theory are taken into account. He tried to revive and reformulate the old idea of aether \cite{dirac}. The idea that an alternative to Einstein's gravity is needed in order to reconciliate, somehow, classical geometry with quantum theory is, at least, an interesting one.\footnote{Another approach is that of noncommutative geometry \cite{connes}} In the present paper we study gravitational fields that are space/time imprints of coherent quantum states on a homogeneous complex domain for the conformal group. We start with the simplest toy case of the Poincar\'{e} disk that is a homogeneous space for the group $SU(1,1).$ It's Shilov's boundary - cf. \cite{coja90} and references threre) is just the unit circle, which plays the role of the compactified Minkowski space in this case. Since the circle is one--dimensional, Riemannian metrics on such a space are easy to describe - they are represented by  positive functions on the circle. Using Cayley's transform the circle (minus one point) is mapped onto $\BR.$ We study a particular family of quadratic functions over $\BR$ (a special family of parabolas) and show that they are generated by coherent quantum states on the unit disk. Then we move to the case of interest, namely the complex homogeneous bounded domain $D=SU(2,2)/S(U(2)\times U(2))$ and study a particular class (transitive under the action of $SU(2,2)$) of coherent states on $D.$ In this case Shilov's boundary of $D$ is the compactified Minkowski space, and we show that the imprints of these stats on the boundary can be interpreted as gravitational fields in the conformal class of the Minkowski metric. In fact, we show that what we get is a family of de Sitter type metrics.
\section{The space $H^2$}
Consider the following 2--parameter family of parabolas:
\be y_{q,p}(x)=\frac{1}{2}\left(\frac{(x-q)^2}{p}+p\right),\quad q\in \BR,\, p>0.\label{eq:para}\ee For each of the parabolas its focus is at $(q,p)$ and the distance between the minimum and the focus is the same as between the minimum and $p=0$ axis.
Each of these parabolas represents a vector field on $\BR:$
\be e(x;q,p)=y_{q,p}(x).\label{eq:vf}\ee
Note: Because we are in one dimension, we will suppress covariant and contravariant indices. Let $\omega(x;q,p)$ be the 1--form dual to $e(x;q,p):$
$$\omega(x;q,p)=1/e(x;q,p),$$ then $\omega$ defines the metric \be g=\omega\otimes\omega=\frac{4}{\left(p+\frac{(x-q)^2}{p}\right)^2},\label{eq:g}\ee whose volume form is $\omega.$ Its inverse is
$g^{-1} = \frac{1}{4} \left(p+\frac{(x-q)^2}{p}\right)^2,$ Let $\ch,\ck$ be two vectors tangent to the space of (covariant) metrics. We define their scalar product at $g$ by the standard formula \cite{michor1}:
$(\ch,\ck)_{g} = \int_\BR g^{-1}\, \ch\, g^{-1}\,\ck\, \omega\, dx,$
where we omit the trace.\\
From Eq. (\ref{eq:g}) we have
\begin{eqnarray*}\ch = dg &=& \frac{\partial\, g (x;q,p)}{\partial\, q}dq\,+\frac{\partial\, g (x;q,p)}{\partial\, p}dp=\\ &=&
\frac{8 p \left(2  p (x-q)dq+ \left(q^2-p^2-2 q x+x^2\right)\right)dp}{\left(q^2+p^2-2 q x+x^2\right)^3}.\end{eqnarray*}
We can now calculate the induced quadratic form $(\ch,\ch)_{g} = \int_\BR g^{-2}\, \ch^2\,\omega\, dx.$ The integrand is
$(8 (2 p (x-q)dq + (q^2 - p^2 - 2 q x + x^2)dp)^2)/(p (q^2 +
   p^2 - 2 q x + x^2)^3),$ with the primitive function
\begin{eqnarray*}-\frac{2}{p} \left(\frac{2 (x-q) (-dp (q+p-x)+dq (-q+p+x)) (dq (q+p-x)+dp (-q+p+x))}{\left(p^2+(x-q)^2\right)^2}\right)+\\ \frac{2}{p} \left(\frac{2 \left(dq^2+dp^2\right) \text{ArcTan}\left[\frac{x-q}{p}\right]}{p}\right),\end{eqnarray*}
   and the integral from $-\infty$ to $+\infty$ gives (only the second term contributes)
$(\ch,\ch)_{g} = \frac{4 \left(dq^2+dp^2\right) \pi }{p^2},$ which is, up to a constant proportionality factor, the standard Bolyai-Lobachevsky hyperbolic metric on the upper $(q,p)$ half--plane.
\subsection{The group $SL(2,\BR)$}
Let us first recall some classical facts. We denote by $\mathds{H}^2$ the upper half--plane \be\mathds{H}^2=\{z=q+\I p:q\in \BR,\, p>0\}.\ee  The group $SL(2,\BR)$ of $2\times 2$ real  matrices of determinant $1$ acts on $\mathds{H}^2$ by fractional--linear transformations. For a matrix $A=\left(\begin{smallmatrix}a&b\\c&d\end{smallmatrix}\right)$ we denote by $\sigma_A$ the transformation
$\sigma_A:\,z\mapsto\sigma_g(z)=\frac{az+b}{cz+d}.$ Then, with $A\in SL(2,\BR)$ and $z=q+\I p$ we have
\be \Re\left(\sigma_A(z)\right) =\frac{b d-q+2 a d q+a c \left(q^2+p^2\right)}{d^2+2 c d q+c^2 \left(q^2+p^2\right)},\label{eq:resigma}\ee
\be \Im\left(\sigma_A(z)\right)=\frac{p}{d^2+2 c d q+c^2 \left(q^2+p^2\right)}.\label{eq:imsigma}\ee In particular $\Im\left(\sigma_A(z)\right)>0$ if $p>0.$ The Jacobian matrix $J_A(q,p)=\frac{\partial\,\sigma_A(q+\I p)}{\partial\,(q,p)}$ implementing the
tangent map $d\sigma_A$ at $z$ is given by:
$$J_A(q,p) =
m_A(q,p)\begin{pmatrix}\left(d^2+2 c d q+c^2 \left(q^2-p^2\right)\right)&2 c(d+c q) p\\
-2 c(d+c q) p&\left(d^2+2 c d q+c^2 \left(q^2-p^2\right)\right)\end{pmatrix},$$
where $m_A(q,p)=\frac{ad-bc}{\left(d^2+2 c d q+c^2 \left(q^2+p^2\right)\right)^2}.$
Let \be G(q,p)=\frac{1}{p^2}\left(\begin{smallmatrix}1&0\\0&1\end{smallmatrix}\right)\label{eq:G}\ee be the standard hyperbolic metric on $\mathds{H}^2.$ Then, for $A\in SL(2,\BR),$ by a straightforward calculation,
\be {}^tJ(q,p)G(\sigma_A(z))J(q,p)= \frac{1}{\Im(\sigma_A(z))}{}^tJ(q,p)J(q,p)=\frac{1}{p^2}I = G(q,p),\ee
so that $G$ is invariant under $SL(2,\BR)$ transformations.\vskip0.5cm

The group action of $SL(2,\BR)$ extends to the real line (except for a possible singular point if $cx+d=0$), which we will also denote by the letter $\sigma.$
\begin{proposition}The system of vector fields $e(x;q,p)$ is covariant under the action of $SL(2,\BR):$
$ e(\sigma_A(x);\sigma_A(q,p))=d\sigma_A(e(x;q,p)).$
\end{proposition}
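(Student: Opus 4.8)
The plan is to collapse the identity to a single classical transformation law by first rewriting the parabola $y_{q,p}$ in a manifestly invariant form. Thinking of $x\in\BR$ as a boundary point of $\mathds{H}^2$ and setting $z=q+\I p$, one has
\be e(x;q,p)=y_{q,p}(x)=\frac12\left(\frac{(x-q)^2}{p}+p\right)=\frac12\,\frac{(x-q)^2+p^2}{p}=\frac12\,\frac{|x-z|^2}{\Im z}.\label{eq:propgeo}\ee
This is the key observation: the proposition then asserts precisely that the right-hand side of (\ref{eq:propgeo}) acquires the weight $(cx+d)^{-2}$ under $A=\left(\begin{smallmatrix}a&b\\c&d\end{smallmatrix}\right)\in SL(2,\BR)$, and $(cx+d)^{-2}$ is exactly the factor by which the one-dimensional tangent map $d\sigma_A$ acts at $x$, since $\sigma_A'(x)=\frac{ad-bc}{(cx+d)^2}=\frac{1}{(cx+d)^2}$.

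Next I would assemble the two elementary ingredients responsible for that weight. Using $ad-bc=1$, a one-line computation gives the cocycle identity $\sigma_A(x)-\sigma_A(z)=\frac{x-z}{(cx+d)(cz+d)}$, which holds for any two points in the common domain of $\sigma_A$, in particular for the real number $x$ and the point $z\in\mathds{H}^2$; hence, $x,c,d$ being real, $|\sigma_A(x)-\sigma_A(z)|^2=|x-z|^2/\bigl((cx+d)^2|cz+d|^2\bigr)$. The second ingredient is the standard formula $\Im\sigma_A(z)=\Im z/|cz+d|^2$, which is just Eq.~(\ref{eq:imsigma}) rewritten, and which in particular shows that $\sigma_A(q,p):=(\Re\sigma_A(z),\Im\sigma_A(z))$ again lies in $\mathds{H}^2$, so that the parabola $y_{\sigma_A(q,p)}$ is meaningful.

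Combining the two, and abbreviating $z'=\sigma_A(z)$,
\begin{eqnarray*}
e(\sigma_A(x);\sigma_A(q,p)) &=& \frac12\,\frac{|\sigma_A(x)-z'|^2}{\Im z'}\\
&=& \frac12\,\frac{|x-z|^2/\bigl((cx+d)^2|cz+d|^2\bigr)}{\Im z/|cz+d|^2}\\
&=& \frac{1}{(cx+d)^2}\cdot\frac12\,\frac{|x-z|^2}{\Im z}=\frac{e(x;q,p)}{(cx+d)^2},
\end{eqnarray*}
and by the remark above the last expression equals $\sigma_A'(x)\,e(x;q,p)=d\sigma_A(e(x;q,p))$, which is the assertion (valid wherever $cx+d\neq0$).

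I do not expect a genuine obstacle: once the rewriting (\ref{eq:propgeo}) is in hand the algebra is a couple of lines. The only points needing care are bookkeeping ones: reading the symbol $\sigma_A(q,p)$ in the statement as $(\Re\sigma_A(z),\Im\sigma_A(z))$ and checking it stays in the upper half-plane (guaranteed by $\Im\sigma_A(z)>0$, see Eq.~(\ref{eq:imsigma})); and remembering that $e$ is a contravariant object, so that $d\sigma_A$ acts on it by multiplication by $\sigma_A'(x)$ rather than by the reciprocal weight appropriate to the dual form $\omega=1/e$ or to a density. A brute-force alternative, substituting Eqs.~(\ref{eq:resigma})--(\ref{eq:imsigma}) directly into $y_{q,p}$, also works but yields a cumbersome rational identity; factoring everything through the cocycle $(cx+d)$ is what makes the computation transparent.
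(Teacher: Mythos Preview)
Your argument is correct. The rewriting $e(x;q,p)=\tfrac12\,|x-z|^2/\Im z$ together with the standard cocycle identity $\sigma_A(x)-\sigma_A(z)=(x-z)/\bigl((cx+d)(cz+d)\bigr)$ and $\Im\sigma_A(z)=\Im z/|cz+d|^2$ does exactly what you claim, and the interpretation of $d\sigma_A$ as multiplication by $\sigma_A'(x)=(cx+d)^{-2}$ is the right one for a contravariant field.

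The paper, however, does not take this route: it performs the direct substitution you describe at the end as the ``brute-force alternative,'' plugging the explicit expressions from Eqs.~(\ref{eq:resigma})--(\ref{eq:imsigma}) and $\sigma_A(x)=(ax+b)/(cx+d)$ into the definition of $e$, obtaining $e(\sigma_A(x);\sigma_A(q,p))=\bigl(q^2+p^2-2qx+x^2\bigr)/\bigl(2p(d+cx)^2\bigr)$, and then comparing with $\sigma_A'(x)\,e(x;q,p)$. Your approach buys transparency and generality: the $|cz+d|^2$ factors visibly cancel, leaving only the boundary weight $(cx+d)^{-2}$, and the same argument would work for any $e$ built from $|x-z|^2/\Im z$. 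The paper's approach buys self-containment---no auxiliary identities need to be quoted---at the price of an opaque rational computation.
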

\begin{proof}
By substituting $\sigma_A(q+\I p)$ and $\sigma_A(x)=\frac{b (d+c x)+a \left(d x+c x^2\right)}{d^2+2 c d x+c^2 x^2}$ from Eqs. (\ref{eq:resigma},\ref{eq:imsigma}) into Eq. (\ref{eq:para}) we obtain that
$e(\sigma_A(x);\sigma_A(q,p))=\frac{\left(q^2+p^2-2 q x+x^2\right)}{2 p (d+c x)^2}.$ On the other hand we have for $d\,\sigma_A(x)$ the explicit formula:
$\frac{\partial\,\sigma_A(x)}{\partial\, x}=\frac{1}{(d+c x)^2},$ and therefore
$\frac{\partial\,\sigma_A(x)}{\partial\, x}\times e(x,q,p)=\frac{\left(p+\frac{(x-q)^2}{p}\right)}{2 (d+c x)^2}.$
\end{proof}
\subsection{Poncar\'{e} disk $D.$}
The Cayley transform $z\mapsto w(z),$ with
\be w(z) = (z-\I)/(z+\I),\, z=q+\I\,p\label{eq:cayley}\ee
maps the upper half--plane onto the unit disk $D$ in the complex plane. Its inverse is given by \be z(w)=\I(1+w)/(1-w),\label{eq:icayley}\ee with
$p=\Re(z(w))=-2 y/(1-2 x+x^2+y^2),$
$ q=\Im(z(w))=(1-x^2-y^2)/(1-2 x+x^2+y^2).$ Writing $w=x+\I y,$ the tangent map to $w\mapsto z(w)$ is given by the matrix
$$J_c(x,y)=\frac{2}{1-2 x+x^2+y^2}\left(\begin{smallmatrix}2 (-1+x) y&\left(-1+2 x-x^2+y^2\right)\\
\left(1-2 x+x^2-y^2\right)&2 (-1+x) y\end{smallmatrix}\right).$$ Then,  denoting by $I$ the $2\times 2$ identity matrix, by a straightforward calculation we have
$${}^tJ_c(x,y)G(z(w))J_c(x,y)=4I/(1-x^2-y^2)^2,$$ which defines the induced metric on $D.$ The Cayley transform intertwines the fractional--linear transformations by $SL(2,\BR)$ on $\mathds{H}^2$ and fractional--linear transformations by $SU(1,1)$ on $D.$ The connection between the two groups is given by the matrix \cite{habermann1} $\gamma_c$ in $SL(2,\BC):$
$\gamma_c=\frac{1-\I }{2}\left(\begin{smallmatrix}1&-\I \\1&\I \end{smallmatrix}\right),$ with
$\gamma_c^{-1}=\frac{1-\I }{2}\left(\begin{smallmatrix}\I & \I \\-1&1\end{smallmatrix}\right).$
We have
$A\in SU(1,1)$ if and only if $\gamma_c^{-1}A\gamma_c\in SL(2,\BR ).$
The hyperbolic metric $\frac{1}{p^2}\,I$ of $H^2$ is then mapped onto the metric $4I/1-x^2-y^2)^2$ of the Poincar\'{e} disk.

The inverse Cayley inverse transform (cf. Eq.(\ref{eq:icayley})) the unit circle - the boundary of $D$ - to the real line $p=0,$ except for one singular point. Parametrizing the unit circle by $w(t)=\exp(\I t),$ we have
$ z(w(t))=\i \frac{1+\exp(\I t)}{1-\exp(\I t)}=-\cot(t/2),$ with the derivative:
$ \frac{d}{dt}z(w(t))=\frac{1}{2}\csc^2 (t/2).$ The family of metrics $g(q,p;x)$ on $\BR$ is pulled back on the unit circle parametrized by $t$ with the map $t\mapsto z(w(t))$ to give the following metric on the circle:
\be g(\xi;z)=(\frac{d}{dt}z(w(t)))^2 g(q,p;z(w(t)))=\frac{1-\xi\bar{\xi}}{|1-z\bar{\xi}|^2},\label{eq:cmetric}\ee
where we changed the parametrization from $(q,p)\in H^2$ to $\xi=w(q+\I p)$ on $D$ and used $ z=\exp (\I t ).$
The $SL(2,\BR )$--invariant metric $G_H$ (cf. Eq. (\ref{eq:G})) on $H^2$ is pulled back through the inverse Cayley transform $w\mapsto z(w)$ and induces the standard (cf. ref. \cite{canon}) $SU(1,1)$-invariant metric on $D:$
\be ds^2=\frac{4(dx^2+dy^2)}{1-x^2-y^2}.\label{eq:dmetric}\ee
\section{Coherent states on the Poincar\'{e} disk}
In his paper `General Concept of Quantization' \cite{berezin}, F. A. Berezin described, in particular,  quantization on the Poincar\'{e} disk (cf. also \cite{perelomov, antoine}). Here, following \cite[p. 57]{antoine} we will take a small variation of his method as explained below. Berezin starts with the Hilbert space $\mathcal{F}_h$ of analytic functions on $D$ with the scalar product
\be (f,g)=\left(\frac{1}{h}-1\right)\int f(z)\bar{g}(z)(1-z\bar{z})^{\frac{1}{h}}d\mu(z,\bar{z}),\ee
where $d\mu(z,\bar{z})=\frac{1}{2\pi\I }\frac{dz\wedge d\bar{z}}{(1-z\bar{z})^2}$ is the $SU(1,1)$ invariant measure on $D.$ We will take $h=1/2,$ so that the scalar product can be written as
\be (f,g)=\frac{1}{2\pi \I}\int f(z)\bar{g}(z)\,dz\wedge d\bar{z}.\ee
The important role in Berezin's quantization scheme is being played by the family of `coherent states'. To this end we follow \cite{antoine} and introduce the Hilbert space $\mathcal{F}$ of functions square integrable with respect to the invariant measure $d\mu(z,\bar{z}).$ To each point $v\in D$ there is associated a particular vector $\eta_v$ in this Hilbert space given by (using our conventions (cf. also \cite[Eq. (4.99)]{antoine}):
$ \eta_v(z)=\frac{1-v\bar{v}}{(1-z\bar{v})^2}.$
By comparing with Eq. (\ref{eq:cmetric}) we see that on the boundary of $D$ the absolute values of the coherent states $\eta_v$ coincide with the metrics $g(\xi;z).$
\section{Densities for the group $SU(2,2)$}
The group $SU(2,2)$ consists of $4\times4$ matrices
$M=\left(\begin{smallmatrix}A&B\\C&D\end{smallmatrix}\right),$
where $A,B,C,D$ are $2\times2$ complex matrices, satisfying
$M^*GM=G,\quad \det(M)=1,$
$G=\left(\begin{smallmatrix}E&0\\0&-E\end{smallmatrix}\right),$
where ${}^*$ denotes the Hermitian conjugate, and $E$ is the $2\times2$ unit matrix. The inverse $M^{-1}=GM^*G$ is then easily seen to be given by
\be
M^{-1}=\left(\begin{smallmatrix}A^*&-C^*\\-B^*&D^*\end{smallmatrix}\right).\label{eq:inv}\ee
The condition $M^*GM=G,$ when written in terms of $2\times 2$ matrices reads $A^*A-C^*=E,$ $D^*D-B^*B=E,$ $A^*B-C^*D=0,$ or, equivalently, as $MGM^*=G,$ i.e.:
\be AA^*-BB^*=E,\,
DD^*-CC^*=E,\,
AC^*-BD^*=0.
\label{eq:su22}\ee
It follows automatically from these conditions that for the operator norms we have $||A||\geq 1,$ $||D||\geq 1,$ and therefore $A$ and $D$ are invertible. In particular we may apply the following general formula (cf. e.g. \cite{meyer}) for the determinant of the block matrices:
\begin{eqnarray} \det(M)&=&\det(A)\,\det(D-CA^{-1}B)\\
&=&\det(D)\det(A-BD^{-1}C).
\label{eq:dets}\end{eqnarray}
Let $\bcd$ (resp. $\cd$) be the set of all $2\times 2$ complex matrices $Z$ satisfying $Z^*Z\leq 1,$ (resp. $Z^*Z<1$) or, equivalently - invoking the polar decomposition theorem, $ZZ^*\leq 1$ (resp. $ZZ^*<1$). The group $SU(2,2)$ acts on $\bcd$ by linear fractional transformations:
\be M:\,Z\mapsto Z'=(AZ+B)(CZ+D)^{-1},\label{eq:action}\ee
with $CZ+D$ being automatically invertible for $Z\in \bcd.$ The action of $SU(2,2)$ on $\cd$ is transitive -- cf. e.g. \cite{ruhl,jakodzi}.  \footnote{This action is not effective. The kernel of this action is nontrivial ($=Z_4$), and consists of four $4\times 4$ matrices $\{I,-I,iI,-iI\}.$}
\footnote{The action (\ref{eq:action}) can be interpreted in two ways: either as an active transformation of $\cd$ or as a passive change of complex coordinates in $\cd.$}

It follows from Eqs. (\ref{eq:su22}) and (\ref{eq:action}) that
\be E-Z_1'^*Z_2'=(CZ_1+D)^{-1*}(E-Z_1^*Z_2)(CZ_2+D)^{-1},\label{eq:euz}\ee
and, in particular,
\be E-Z'^*Z'=(CZ+D)^{-1*}(E-Z^*Z)(CZ+D)^{-1}.\label{eq:ezz}\ee
Therefore the action of $SU(2,2)$ maps $\cd$ onto $\cd.$ We denote by $\hcd$ the set of all unitary $2\times 2$ matrices - the so called Shilov boundary of $\cd.$  It follows from Eq. (\ref{eq:ezz}) that the transformations of $SU(2,2)$ map $\hcd$ onto itself. $\cd$ is a complex manifold (in fact, it is endowed with a natural K\"{a}hlerian structure), and the transformations of $SU(2,2)$ are holomorphic. By a holomorphic density of weight $n$ we will understand a holomorphic function $\Phi(Z),$ given in each coordinate system $Z,$ with the transformation law:
$ \Phi'(Z')=\det \left(\frac{\partial Z}{\partial Z'}\right)^n\Phi(Z).$ In the following we will need the explicit formula for the (complex) Jacobian determinant for linear fractional transformations.
\begin{lemma}
For transformations of the form (\ref{eq:action}), with $A,B,C,D$ arbitrary $2\times 2$ matrices, we have:
$ \det\left(\frac{\partial Z'}{\partial Z}\right)=\det(M)^2\det(CZ+D)^{-4}.$
provided $CZ+D$ is invertible. In particular, for $M$ in $SU(2,2)$ we have
$ \det\left(\frac{\partial Z'}{\partial Z}\right)=\det(CZ+D)^{-4}.$
\label{lem:det}\end{lemma}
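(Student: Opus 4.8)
The plan is to differentiate the map $Z\mapsto Z'$ directly, to recognise the resulting tangent map as a two-sided matrix-multiplication operator on the four-complex-dimensional space $M_2(\BC)$ of $2\times2$ matrices, and then to evaluate the one nontrivial determinant that arises by a block-triangular factorisation of $M$. Since $Z\mapsto Z'$ is holomorphic the tangent map is $\BC$-linear, so $\det(\partial Z'/\partial Z)$ is understood in the complex sense throughout, and $CZ+D$ is invertible by the standing hypothesis.

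First I would compute the differential. Setting $W=CZ+D$ and using $d(W^{-1})=-W^{-1}(dW)W^{-1}$ together with $dW=C\,dZ$, a short manipulation gives
\[
dZ'=(A\,dZ)\,W^{-1}-Z'(C\,dZ)\,W^{-1}=(A-Z'C)\,dZ\,(CZ+D)^{-1}.
\]
Thus, with $P:=A-Z'C$ and $Q:=(CZ+D)^{-1}$, the tangent map at $Z$ is the operator $X\mapsto PXQ$ on $M_2(\BC)$. Such a two-sided multiplication operator has determinant $\det(P)^{2}\det(Q)^{2}$ — a standard fact (its eigenvalues are the products of an eigenvalue of $P$ with an eigenvalue of $Q$, or one may realise it as a Kronecker product on $\BC^{2}\otimes\BC^{2}$). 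Hence
\[
\det\!\left(\frac{\partial Z'}{\partial Z}\right)=\det(A-Z'C)^{2}\,\det(CZ+D)^{-2},
\]
and the problem is reduced to computing $\det(A-Z'C)$.

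To do this I would exploit the defining relation $Z'(CZ+D)=AZ+B$ through the block identity
\[
\begin{pmatrix}E&-Z'\\0&E\end{pmatrix}M\begin{pmatrix}E&Z\\0&E\end{pmatrix}=\begin{pmatrix}A-Z'C&0\\C&CZ+D\end{pmatrix},
\]
which one checks by multiplying out: the upper-right block equals $AZ+B-Z'(CZ+D)$, and it vanishes precisely because $Z'=(AZ+B)(CZ+D)^{-1}$. The two outer triangular matrices are unimodular and the right-hand side is block lower-triangular, so taking determinants gives $\det(M)=\det(A-Z'C)\,\det(CZ+D)$, i.e. $\det(A-Z'C)=\det(M)\,\det(CZ+D)^{-1}$. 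Substituting this into the previous display yields $\det(\partial Z'/\partial Z)=\det(M)^{2}\det(CZ+D)^{-4}$, and the special case is then immediate from $\det(M)=1$ for $M\in SU(2,2)$.

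I do not expect a serious obstacle here; the points needing the most care are getting the exponent in $\det(X\mapsto PXQ)=\det(P)^{2}\det(Q)^{2}$ right (it is the size $2$ of the blocks, not $4$), and keeping track of the non-commutativity of the $2\times2$ blocks throughout — in particular the order of the three factors in $(A-Z'C)\,dZ\,(CZ+D)^{-1}$ is essential. Everything else is a routine, if slightly lengthy, matrix computation.
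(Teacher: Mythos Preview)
Your proof is correct and follows essentially the same route as the paper: differentiate to obtain $dZ'=(A-Z'C)\,dZ\,(CZ+D)^{-1}$, use $\det(X\mapsto PXQ)=\det(P)^2\det(Q)^2$, and then identify $\det(A-Z'C)\det(CZ+D)=\det(M)$. The only cosmetic difference is in this last step: the paper right-multiplies $M$ by $\left(\begin{smallmatrix}E&Z\\0&E\end{smallmatrix}\right)$ and invokes the Schur complement formula (\ref{eq:dets}), whereas you multiply on both sides to reach a block lower-triangular matrix directly --- the same identity, packaged slightly differently.
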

\noindent {\bf Proof}: By differentiation of both sides of Eq. (\ref{eq:action}) we easily get:
$ dZ'=(D-(AZ+B)(CZ+D)^{-1}C)dZ(CZ+D)^{-1},$
where $dZ$ stands for $dZ_{ij},$ $(i,j=1,...,4).$
For a transformation $L$ on $2\times 2$ matrices, of the form $L(X)=AXB,$ we have a general formula \cite{silvester}:
$ \det(L)=\det(A)^2\det(B)^2.$
Writing $B'=AZ+B,$ $D'=CZ+D,$ we thus have:
$ \det\left(\frac{\partial Z'}{\partial Z}\right)=\det(A-B'D'^{-1}C)^2\det(CZ+D)^{-2}.$
It follows then, using Eq. (\ref{eq:dets}) that
$ \det(A-B'D'^{-1}C)\det(D')=\det\left(\begin{smallmatrix}A&B'\\C&D'\end{smallmatrix}\right).$
Now, notice that
$$ \det\left(\begin{smallmatrix}A&B'\\C&D'\end{smallmatrix}\right)=\det\left(\begin{smallmatrix}A&AZ+B\\C&CZ+D\end{smallmatrix}\right)=\det\left(\left(\begin{smallmatrix}A&B\\C&D\end{smallmatrix}\right)\left(\begin{smallmatrix}E&Z\\0&1\end{smallmatrix}\right)\\
\right)=\det(M).
$$Thus the lemma follows. \qed

Let $\cs$ be the set of all coordinate systems obtained from the standard coordinate system of the space of $4\times 4$ complex matrices by $SU(2,2)$ transformations (\ref{eq:action}). Restricting complex densities to these coordinate systems we get, for the transformation rule of a density of weight $n$ the formula
$ \Phi'(Z')=\det(CZ+D)^{4n}\,\Phi(Z).$
In the following we will restrict our attention to coordinate system from $\cs.$ For this class of coordinates we will investigate holomorphic densities of weight $n=1/4,$ with the transformation law
\be \Phi'(Z')=\det(CZ+D)\,\Phi(Z).\label{eq:den14}\ee
We will call them simply {\bf densities}. The vector space of densities will be denoted by $\cf.$
\subsection{Coherent states}
If $\Phi$ is a density, then it is enough to know the function $Z\mapsto \Phi(Z)$ in one coordinate system. It will then be determined in every other coordinate system from $\cs$ using the formula (\ref{eq:action}). Using the standard coordinate system of $\mbox{Mat}(4,\BC)$ to each point $\xi\in\cd$ we will associate a density $\Phi_\xi$ by the following construction: to the origin $\xi=0$ we associate the density $\Phi_0(Z)\equiv 1.$ If $\xi$ is an arbitrary point in $\cd,$ then the matrix $M_\xi$ given by:
\be M_\xi =\left(\begin{smallmatrix}A&B\\C&D\end{smallmatrix}\right),\label{eq:mksi}\ee
with $A=(E-\xi\xi^*)^{-1/2},$ $D=(E-\xi^*\xi)^{-1/2},$ $C=\xi^*A,$ $B=\xi D,$
is easily seen to be in $SU(2,2)$ and it maps $Z=0$ to $Z=\xi.$ The inverse matrix $M_\xi^{-1}$ is then given (cf. Eq. (\ref{eq:inv})) by
\be M_\xi^{-1}=\left(\begin{smallmatrix} (E-\xi\xi^*)^{-1/2}&-(E-\xi\xi^*)^{-1/2}\xi^*\\-(E-\xi^*\xi)^{-1/2}\xi&(E-\xi^*\xi)^{-1/2}\end{smallmatrix}\right).\label{eq:minv}\ee
In the coordinate system $Z'$ obtained from the standard one by the application of $M_\xi^{-1},$  $\xi$ is transformed into $0,$ therefore for the density associate to $\xi$ we should have $\Phi'(Z')\equiv 1.$ It follows then, by using Eqs. (\ref{eq:inv}),(\ref{eq:den14}), that $\Phi_\xi$ should be defined by:
$$ \Phi_\xi(Z)=\det\left((E-\xi^*\xi)^{-1/2}-(E-\xi^*\xi)^{-1/2}\xi^*Z\right)^{-1},$$
or
\be  \Phi_\xi(Z)=\frac{\det \left(E-\xi^*\xi\right)^{1/2}}{\det\left(E-\xi^*Z\right)}.\label{eq:coherent}\ee
We call $\xi\mapsto\Phi_\xi$ the system of {\em coherent states\,}. The system is equivariant in the sense that, for an $SU(2,2)$ transformation
$\xi\mapsto \xi',\quad Z\mapsto Z'$ we have, as can be easily computed, the formula
\be \Phi_{\xi'}(Z')=\frac{\det(C\xi+D)^{*}}{|\det(C\xi+D)|}\,\det(CZ+D)\Phi_\xi(Z).\label{eq:bieq}\ee
The first factor on the right is a pure phase factor. This fact will prove to be of importance later on. The formula (\ref{eq:bieq}) is a particular case of the transformation law of a {\em bi--density\,} of weight $(m,n).$ Denoting by $J(Z)$ the Jacobian determinant of the transformation, we have, for such a bi--density, the formula
\be \Phi(Z_1',Z_2')=\left(\frac{J(Z_1)}{|J(Z_1)|}\right)^{-m}\,J(Z_2)^{-n}\,\Phi(Z_1,Z_2).\label{eq:bidensity}\ee
In our case, with $Z_1=\xi,Z_2=Z),$ we take $m=n=1/4.$

\subsection{The Cayley transform}
The Cayley transform and its inverse are defined as in the $1$--dimensional case by the formula:

$W=i\frac{E-Z}{E+Z},$ $Z=\frac{E+iW}{E-iW}.$
The Cayley transform $w:\,Z\mapsto W$  may be considered as a transformation of the form (\ref{eq:action}) with $A=E,B=E,C=-iE,D=-iE,$ with the determinant of the corresponding matrix $M$ being $\det(M)=-4.$ It follows then from the Lemma \ref{lem:det} that
$ \frac{\partial W}{\partial Z}=16\det(E+Z)^{-4},$
and, using a similar argument,
\be \frac{\partial Z}{\partial W}=16\det(E-iW)^{-4}.\label{eq:dzdw}\ee
\begin{remark} Notice that there is an error in the formula (2.12) of \cite{ruhl}. The corresponding numerical factors there should be $2^{-8}$ and $2^8$ instead of $2^{-4}$ and $2^{12}$ resp. The Jacobian determinants there are for real coordinates, they are squares of absolute values of complex Jacobi determinants as in our formulas above.
\end{remark}
The Cayley transform maps the domain $\cd$ onto the {\em future tube\,} $\ct=\{W=X+iY:\, X=X^*,Y>0\}.$
An open dense subset of the Shilov boundary $\hat{D}$ of $\cd$ is mapped onto the set $\cm$ of all Hermitian $2\times 2$ matrices:
$\cm=\{X\,: X=X^*\}.$
We can use now the formulas (\ref{eq:coherent}),(\ref{eq:bidensity}),(\ref{eq:dzdw}), and obtain the expression of coherent states in terms of the future tube variables $W=w(Z),\, \zeta=w(\zeta)$:
$ \Phi_\zeta(W)=\frac{\det(\zeta-\zeta^*)^{1/2}}{\det(W-\zeta^*)}.$
Let us introduce the standard basis in the space of Hermitian $2\times 2$ matrices $\sigma_0=E,\sigma_1,\sigma_2,\sigma_3$ defined by:
$\sigma_0=\left(\begin{smallmatrix}1&0\cr 0&1\end{smallmatrix}\right),$ $\sigma_1=\left(\begin{smallmatrix}0&1\cr 1&0\end{smallmatrix}\right),$ $\sigma_2=\left(\begin{smallmatrix}0&-i\cr i&0\end{smallmatrix}\right),$ $\sigma_3=\left(\begin{smallmatrix}1&0\cr 0&-1\end{smallmatrix}\right).$
It is convenient to introduce real variables $x^\mu,q^{\mu},l^\nu,\,(\mu=0,...,3) $ via the formulas  \be W=x^\mu\sigma_\mu,\quad \zeta=(q^\mu+il^\mu )\,\sigma_\mu .\ee Notice that we have \be \det(W)=x^2=\eta_{\mu\nu}x^\mu x^\nu,\ee where $\eta_{\mu\nu}=\mbox{diag\,}(+1,-1,-1,-1)$ is the diagonal Minkowski matrix representing the unique (up to a constant scale factor) invariant (with respect to induced action of $SU(2,2)$) conformal structure on $\BR^4.$ Using these new variables the coherent states $\Phi_\zeta(W)$ can be written as: $\Phi_{q,l}(x)=-4l^2\frac{(x-q)^2-l^2-2il(x-q)}{\left(l^2-(x-q)^2\right)^2+4\left(l(x-q)\right)^2}.$
Using the translation we can always make $q=0,$ and then, using a Lorentz rotation, we can get $l=(L,0,0,0).$ This rotationally invariant state reads then as:
$ \Phi_{0,L}(x)=-4R^2\frac{x^2-L^2-2iLx^0}{(L^2-x^2)^2+4L^2(x^0)^2}.$
\subsection{The induced metric}

The Minkowski conformal structure is defined as the constant tensor density $\eta_{\mu\nu}$ of weight $w=-1/2.$ Indeed, if $g_{\mu\nu}$ is a tensor, then $\det(g)$ is a density of weight $2.$ If $\gamma_{\mu\nu}$ is a density of weight $w$, then $\det(\gamma)$ is a density of weight $4+w.$ Therefore, $\det(\gamma)$ can be constant only when $w=-1/2.$ On the other hand the coherent state $\Phi_{0,L}$ is a density of weight $w=1/4.$ It follows that
$ g^L_{\mu\nu}(x)=|\Phi_{0,L}(x)|^2\,\eta_{\mu\nu}$
is a covariant tensor - the space-time metric determined by the coherent state $\Phi_{0,R}.$ Explicitly, written in the standard general relativistic form in radial coordinates $t=x^0,r=\sqrt{(x^1)^2+(x^2)^2+(x^3)^2},\theta,\phi,$ we have
\be ds^2=\frac{16L^4}{L^4+(t^2-r^2)^2+2L^2(t^2+r^2)}\left(dt^2-(dr^2+r^2d\sigma^2)\right),\label{eq:ds2}\ee
where $d\sigma^2=d\theta^2+\sin^2(\theta)d\phi^2.$ After rescaling the $r$ and $t$ coordinates we can, effectively, set $L=1$ to obtain the following conformally flat space--time metric:
$$ ds^2=\frac{1}{1+(t^2-r^2)^2+2(t^2+r^2)}\left(dt^2-(dr^2+r^2d\sigma^2)\right).$$
This is the metric induced by the coherent state $\Phi_\xi$ for $\xi=0.$ The stability group of this point is $S(U(2)\times U(2))$ with the diagonal $U(1)$ subgroup consisting of $SU(2,2)$ matrices of the form:
$ M(\alpha)=\left(\begin{smallmatrix}e^{i\alpha}E&0\\0&e^{-i\alpha}E\end{smallmatrix}\right).$
Via the Cayley transform the action of this subgroup translates into the action on Hermitian $2\times 2$ matrices:
$ W\mapsto \frac{\tan(\alpha)E+W}{E-\tan(\alpha)W}.$
In terms of space--time coordinates $t,r,\theta,\phi$ the trajectories of the action of this $U(1)$ subgroup are $\theta=\mbox{const},\phi=\mbox{const},$ and
\be t(\alpha)=\frac{2\cos(2\alpha)t+\sin(2\alpha)(1+r^2-t^2)}{1+t^2-r^2+\cos(2\alpha)(1+r^2-t^2)-2\sin(2\alpha)t},\ee
\be r(\alpha)=\frac{2r}{1+t^2-r^2+\cos(2\alpha)(1+r^2-t^2)-2\sin(2\alpha)t}.\ee
Differentiating with respect to $\alpha$ at $\alpha=0$ we find the tangent vector field $\Xi$ given by
$ \Xi=(1+r^2+t^2)\frac{\partial}{\partial t}+2rt\frac{\partial}{\partial r}.$
The field $\Xi$ is a {\em radial conformal Killing vector field\,} \cite{herrero2000} for the flat Minkowski metric. It is also, automatically, by its very construction, a true Killing field for the metric given by the line element (\ref{eq:ds2}).
\begin{figure}[ht]
\centering
\includegraphics[width=7cm, keepaspectratio=true]{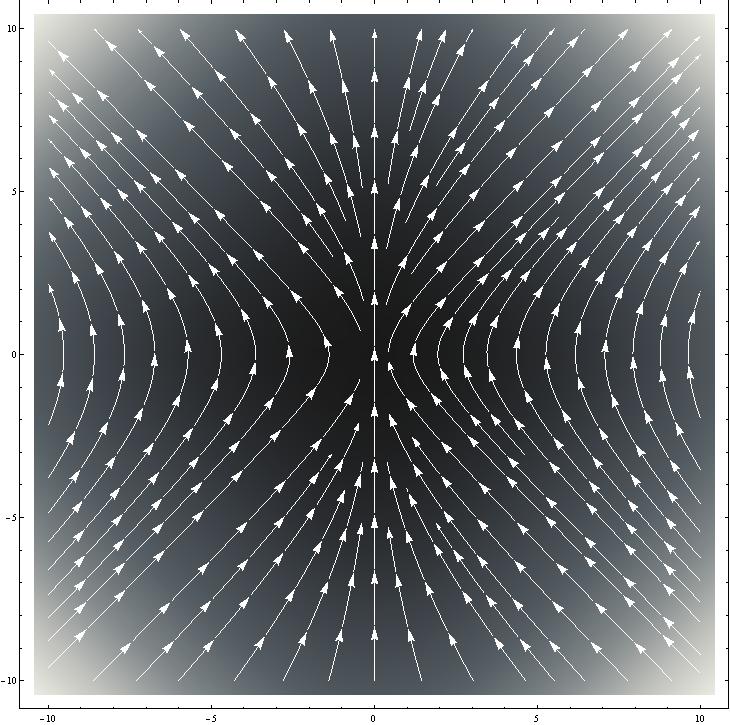}\caption{The Killing vector field $\Xi$ Cayley transformed to the Minkowski space.}
\end{figure}
\subsection{Comoving coordinates}
It is convenient to introduce the comoving coordinates in which the coordinate time is described by the parameter $\alpha$ along the orbits of $\Xi,$ assuming, for instance, that both coordinate systems coincide at $t=0.$ To this end we introduce new coordinates $\tau,\,\rho$ defined by the expressions
$ t=\frac{(1+\rho^2)\sin(2\tau)}{1-\rho^2+(1+\rho^2)\cos(2\tau)},$
$ r=\frac{2\rho}{1-\rho^2+(1+\rho^2)\cos(2\tau)}.$
In new coordinates the line element becomes
$ ds^2=d\tau^2-\frac{1}{(1+\rho^2)^2}(d\rho^2+\rho^2d\sigma^2)$ - the standard form of de Sitter's space--time.

\end{document}